\pgfplotsset{width=8.5cm, height=6.33cm, compat=1.9}
\pgfplotsset{compat=1.8}
\let\NAT@parse\undefined
\newtheorem{theorem}{Theorem}
\pgfplotsset{every axis legend/.style={%
cells={anchor=west},
inner xsep=3pt,inner ysep=2pt,nodes={inner sep=0.8pt,text depth=0.15em},
anchor=north east,%
shape=rectangle,%
fill=white,%
draw=black,
at={(0.98,0.98)},
font=\footnotesize,
}}
\pgfplotsset{every axis/.append style={line width=0.6pt,tick style={line width=0.8pt}}}
\begin{document}

\title{Robust Divergence Angle for Inter-satellite Laser Communications under Target Deviation Uncertainty}

\author{Zhanwei Yu, Yi Zhao, and Di Yuan}
\affil{Department of Information Technology, Uppsala University, Sweden
\authorcr {\em\{zhanwei.yu; yi.zhao; di.yuan\}@it.uu.se}}

\renewcommand*{\Affilfont}{\small}

\maketitle


\begin{abstract}
Performance degradation due to target deviation by, for example, drift or jitter, presents a significant issue to inter-satellite laser communications. In particular, with periodic acquisition for positioning the satellite receiver, deviation may arise in the time period between two consecutive acquisition operations. One solution to mitigate the issue is to use a divergence angle at the transmitter being wider than that if the receiver position is perfectly known. However, as how the deviation would vary over time is generally very hard to predict or model, there is no clear clue for setting the divergence angle. We propose a robust optimization approach to the problem, with the advantage that no distribution of the deviation need to be modelled. Instead, a so-called uncertainty set (often defined in form of a convex set such as a polytope) is used, where each element represents a possible scenario, i.e., a sequence of deviation values over time. Robust optimization seeks the solution that maximizes the performance (e.g., sum rate) that can be guaranteed, no matter which scenario in the uncertainty set materializes.  To solve the robust optimization problem, we deploy a process of alternately solving a decision maker’s problem and an adversarial problem. The former optimizes the divergence angle for a subset of the uncertainty set, whereas the latter is used to explore if the subset needs to be augmented. Simulation results show the approach leads to significantly more robust performance than using the divergence angle as if there is no deviation, or other ad-hoc schemes.
\end{abstract}

\begin{IEEEkeywords}
Robust Optimization, laser communications, inter-satellite communications.
\end{IEEEkeywords}


\section{Introduction}

Inter-satellite laser communication (ISLC) is an effective technology for high-speed data transmission between satellites \cite{Toy2021}. However, the issue of target deviation presents a significant challenge for ISLC. Any deviation due to, for example, drift or jitter, can lead to weaker signal, degraded data rates, or even fail of communication. Periodic acquisition operations are used in ISLC to mitigate target deviation by scanning the target plane with laser beams \cite{Gue2004}. Immediately after an acquisition, when no deviation is present, a highly focused beam using a very small divergence angle can achieve maximum rate at the receiver detector. However, deviations can arise due to drift and jitter if the gap between two consecutive acquisitions is not short enough. The use of a very small divergence angle is not robust in the presence of deviation, as even a small deviation can reduce the rate essentially to zero. By using a wider divergence angle, the light energy is more spread out, and a positive rate can be achieved if the deviation is within a certain range. In fact, one can calculate the optimal divergence angle even though we only know the deviation of the receiver, instead of the exact position \cite{song2016performance}. However, it is difficult to model or predict the deviation.

This paper proposes a robust optimization approach to address the issue of target deviation in ISLC without assuming any specific distributions. Robust optimization finds solutions being resilient to uncertainty or variation in system parameters \cite{ben2009robust}. The goal is to promise a performance that can be guaranteed across all possible scenarios, rather than for any deterministic. To be specific, in robust optimization, a so called uncertainty set where each element represent a possible scenario  under the uncertain system parameters, and the solution is the optimal in performance while safeguarding against all elements of the set. 

In our problem, we use a time-slotted horizon and an element of the uncertainty set is a sequence of deviation values over time slots. By incorporating the uncertainty set, we formulate a robust optimization problem that aims to maximize the sum rate between two acquisition operations while accounting for the uncertainty represented by the uncertainty set. The objective is to identify a divergence angle that guarantees a performance that is robust against all scenarios in the uncertainty set. This means that regardless of which scenario materializes, the resulting performance is at least as good as the one reported by the robust optimization. Furthermore, the optimal solution of the robust optimization ensures that no other solution with a better performance guarantee exists.

The outlined robust problem is hard to tackle due to the lack of explicit mathematical equations in the optimization problem. We deploy a process of alternately solving a decision maker’s problem (DMP) and an adversarial problem (AP). The process exhibits resemblance to a game-theoretic approach known as the ``minmax algorithm". The DMP corresponds to the ``maximization" step, where we aim to optimize the divergence angle for a subset of the uncertainty set. The AP corresponds to the ``minimization" step, where the task is to examine the worse-case performance of the current DMP solution, and thereby to determine if the subset needs to be augmented. Our simulation results demonstrate that the approach provides a good and robust sum rate in the ISLC system compared to using the divergence angle as if there is no deviation, or other ad-hoc schemes. 


\section{Related Works}

There are some studies on beam divergence angle control to reduce the impact of the target deviation. The authors of \cite{lee2022dynamic} use variable focus lenses at both the transmitter and receiver to mitigate the issues caused by pointing errors and angle-of-arrival fluctuations. The authors of \cite{mai2019beam} propose a rapid and power-efficient adaptive beam control technique using non-mechanical variable-focus lenses to address target deviation due to angle-of-arrival fluctuation and pointing error for outage probability. Via approximating the detector shape of the receiver, the authors of \cite{song2016performance} provide a closed-form of the optimal divergence angle for any given deviation of the receiver. The optimal beam divergence angle that minimizes the average bit error probability is also studied in \cite{toyoshima2002optimum, do2020numerical}.

Our work differs from the works in \cite{lee2022dynamic, mai2019beam} because the proposed systems are assumed to be monitoring the target deviation and then they propose a divergence angle optimization method. In addition, our study is different from the studies in \cite{song2016performance, toyoshima2002optimum, do2020numerical} that require knowledge of the distribution of target deviation to analyze the system's performance. In contrast, our paper does not rely on such assumptions.


\section{System Model and Formulation}

\tikzoption{canvas is xy plane at z}[]{%
  \def\tikz@plane@origin{\pgfpointxyz{0}{0}{#1}}%
  \def\tikz@plane@x{\pgfpointxyz{1}{0}{#1}}%
  \def\tikz@plane@y{\pgfpointxyz{0}{1}{#1}}%
  \tikz@canvas@is@plane
}

\definecolor{ForestGreen}{RGB}{34,139,34}

\pgfmathsetmacro{\X}{0}
\pgfmathsetmacro{\Y}{1.7}
\tdplotsetmaincoords{110}{70}

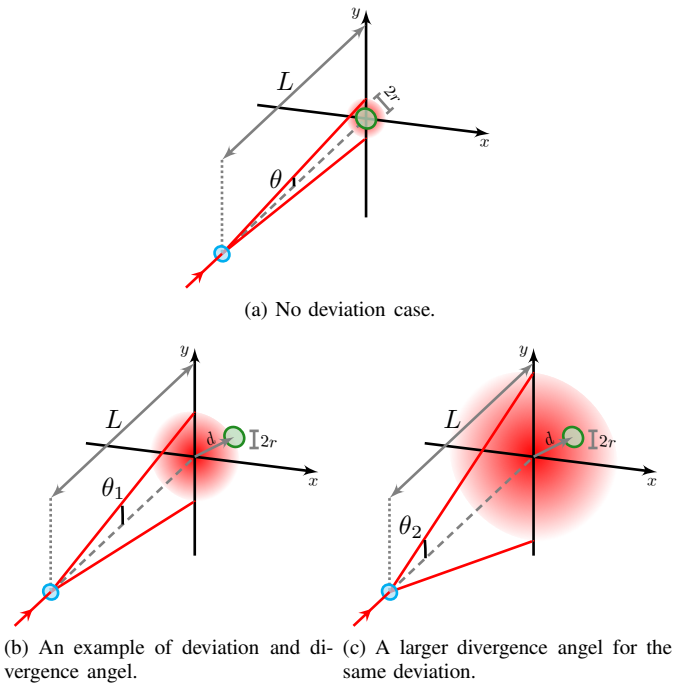
\begin{figure}[t]
    \tikzset{every picture/.style={line width=1pt}} 
    \begin{center}
        \subfloat[No deviation case.]{
            \label{fig:system_model-a}
            \begin{tikzpicture}[scale=0.7]
                \begin{scope}[tdplot_main_coords]
                    \begin{scope}[canvas is yz plane at x=8,xscale=1,yscale=1,transform shape]
                        \draw[inner color=red, draw=none, opacity=1] (0,0) circle (0.4);
                    \end{scope}
        
                    \draw[gray,densely dashed] (0,0,0) -- (8,0,0);
        
                    \begin{scope}[canvas is yz plane at x=8,xscale=1,yscale=1,transform shape]
                    
                        \draw[-latex'] (-2.2,0) -- (2.5,0);
                        \draw[-latex'] (0,-2) -- (0,2.2);
                        \node[] at (2.4,-0.2) {$x$};
                        \node[] at (-0.2,2.1) {$y$};
        
                        \node[circle,fill,red,inner sep=0pt] (A) at ($(0,0)+(90:0.4)$) {};
                        \node[circle,fill,red,inner sep=0pt] (B) at ($(0,0)+(270:0.4)$) {};
                        \filldraw[fill=ForestGreen!30,draw=ForestGreen,fill opacity=0.8] (0, 0) circle (0.2);
                        \draw[{Bar[scale width=0.5]}-{Bar[scale width=0.5]}, gray] (0.5, 0.2) -- (0.2, 0.5);
                        \node[rotate=315] at (0.55, 0.55) {$2r$};
                    \end{scope}
                    
                    \draw[red] (A) -- (0,0,0) (B) -- (0,0,0);
        
                    \begin{scope}[canvas is xz plane at y=0]
                        \node[rotate=0] at (3,0.5) {$\theta$};
                        \draw (4,0) arc (0:5:2);
                        \draw [latex'-latex',gray] (0,1.9) -- (8,1.9);
                        \draw [densely dotted,gray] (0,1.9) -- (0,0);
                        \node[rotate=0] at (3.5, 2.3) {$L$};
                    \end{scope}
        
                    \begin{scope}[canvas is zy plane at x=0]
                        \filldraw[fill=cyan!30,draw=cyan,fill opacity=0.7] (0,0) circle (0.15);
                    \end{scope}
        
                    \draw [red,postaction={decorate,decoration={markings,mark=at position 0.5
                    with {\arrow{stealth};}}}] (-2,0,0) -- (0,0,0);
                    
                \end{scope}
            \end{tikzpicture}
        }
    \end{center}
    
    \subfloat[An example of deviation and divergence angel.]{
        \label{fig:system_model-b}
        \begin{tikzpicture}[scale=0.7]
            \begin{scope}[tdplot_main_coords]
                \begin{scope}[canvas is yz plane at x=8,xscale=1,yscale=1,transform shape]
                    \draw[inner color=red, draw=none, opacity=1] (0,0) circle (0.9);
                \end{scope}
    
                \draw[gray,densely dashed] (0,0,0) -- (8,0,0);
    
                \begin{scope}[canvas is yz plane at x=8,xscale=1,yscale=1,transform shape]
                
                    \draw[-latex'] (-2.2,0) -- (2.5,0);
                    \draw[-latex'] (0,-2) -- (0,2.2);
                    \node[] at (2.4,-0.2) {$x$};
                    \node[] at (-0.2,2.1) {$y$};
    
                    \node[circle,fill,red,inner sep=0pt] (A) at ($(0,0)+(90:0.9)$) {};
                    \node[circle,fill,red,inner sep=0pt] (B) at ($(0,0)+(270:0.9)$) {};
                    \filldraw[fill=ForestGreen!30,draw=ForestGreen,fill opacity=0.8] (0.8,0.5) circle (0.2);
                    \draw[-latex',gray] (0,0) -- (0.8,0.5);
                    \draw[{Bar[scale width=0.5]}-{Bar[scale width=0.5]}, gray] (1.2, 0.3) -- (1.2, 0.7);
                    \node[rotate=0] at (1.5, 0.5) {$2r$};
                    \node[rotate=30] at (0.3,0.4) {$d$};
                \end{scope}
    
                \draw[red] (A) -- (0,0,0) (B) -- (0,0,0);
    
                \begin{scope}[canvas is xz plane at y=0]
                    \node[rotate=0] at (3.5, 0.9) {$\theta_1$};
                    \draw (4,0) arc (0:5.5:4);
                    \draw [latex'-latex',gray] (0,1.9) -- (8,1.9);
                    \draw [densely dotted,gray] (0,1.9) -- (0,0);
                    \node[rotate=0] at (3.5, 2.3) {$L$};
                \end{scope}
    
                \begin{scope}[canvas is zy plane at x=0]
                    \filldraw[fill=cyan!30,draw=cyan,fill opacity=0.7] (0,0) circle (0.15);
                \end{scope}
    
                \draw [red,postaction={decorate,decoration={markings,mark=at position 0.5
                with {\arrow{stealth};}}}] (-2,0,0) -- (0,0,0);
                
            \end{scope}
        \end{tikzpicture}
    }
    \
    \subfloat[A larger divergence angel for the same deviation.]{
        \label{fig:system_model-c}
        \begin{tikzpicture}[scale=0.7]
            \begin{scope}[tdplot_main_coords]
                \begin{scope}[canvas is yz plane at x=8,xscale=1,yscale=1,transform shape]
                    \draw[inner color=red, draw=none, opacity=0.75] (0,0) circle (1.7);
                \end{scope}
    
                \draw[gray,densely dashed] (0,0,0) -- (8,0,0);
    
                \begin{scope}[canvas is yz plane at x=8,xscale=1,yscale=1,transform shape]
                
                    \draw[-latex'] (-2.2,0) -- (2.5,0);
                    \draw[-latex'] (0,-2) -- (0,2.2);
                    \node[] at (2.4,-0.2) {$x$};
                    \node[] at (-0.2,2.1) {$y$};
    
                    \node[circle,fill,red,inner sep=0pt] (A) at ($(\X,0)+(90:\Y)$) {};
                    \node[circle,fill,red,inner sep=0pt] (B) at ($(-\X,0)+(270:\Y)$) {};
                    \filldraw[fill=ForestGreen!30,draw=ForestGreen,fill opacity=0.8] (0.8,0.5) circle (0.2);
                    \draw[-latex',gray] (0,0) -- (0.8,0.5);
                    \draw[{Bar[scale width=0.5]}-{Bar[scale width=0.5]}, gray] (1.2, 0.3) -- (1.2, 0.7);
                    \node[rotate=0] at (1.5, 0.5) {$2r$};
                    \node[rotate=30] at (0.3,0.4) {$d$};
                \end{scope}
    
                \draw[red] (A) -- (0,0,0) (B) -- (0,0,0);
    
                \begin{scope}[canvas is xz plane at y=0]
                    \node[rotate=0] at (1.2,0.9) {$\theta_2$};
                    \draw (2,0) arc (0:11:2);
                    \draw [latex'-latex',gray] (0,1.9) -- (8,1.9);
                    \draw [densely dotted,gray] (0,1.9) -- (0,0);
                    \node[rotate=0] at (3.5, 2.3) {$L$};
                \end{scope}
    
                \begin{scope}[canvas is zy plane at x=0]
                    \filldraw[fill=cyan!30,draw=cyan,fill opacity=0.7] (0,0) circle (0.15);
                \end{scope}
    
                \draw [red,postaction={decorate,decoration={markings,mark=at position 0.5
                with {\arrow{stealth};}}}] (-2,0,0) -- (0,0,0);
                
            \end{scope}
        \end{tikzpicture}
        }

    \caption{The figures depict the transmission of a laser beam to the receiver detection plane. Figures \ref{fig:system_model-b} and \ref{fig:system_model-c} show the same deviation but tow different different divergence angles ($\theta_1 < \theta_2$).}\label{fig:system_model}
\end{figure}

\subsection{The Gaussian Beam Model} \label{gaussianbeam}

Consider laser communication between two satellites. Fig. \ref{fig:system_model-a} shows that the laser is transmitted with a small divergence angle to the receiver when there is no target deviation. The blue circle in the figure represents the transmitter lens, through which a laser beam is transmitted to the receiver detection plane. The distance between the transmitter and the plane is denoted by $L$, and the transmitter beam divergence angle is represented by $\theta$. The laser beam is modeled as a Gaussian beam, and the intensity of a point $(x,y)$ in the plane, is given by \cite{saleh2019fundamentals}
\begin{equation}\label{integration}
I\left(L, \theta, x, y\right) \overset{\Delta}{=} \frac{2}{\pi L^2\theta^2} \cdot \exp\left(\frac{-2\left(x^2 + y^2\right)}{L^2\theta^2}\right).
\end{equation}
The receiver has a disk-shaped detection area with radius $r$. The received signal strength is proportional to the total received energy of the beam over the detection area. 

In Fig. \ref{fig:system_model-a}, it can be observed that without any deviation, the transmitter can use a small divergence angle to focus the laser over the detection area, resulting in a strong total received energy. However, due to drift and jitter, the center of the detection area might not be at the origin. Fig. \ref{fig:system_model-b} and \ref{fig:system_model-c} illustrate the same target deviation, but the effect of divergence angle, respectively. One can see that, with a relatively small divergence angle $\theta_1$, the total received energy over the detection area is significantly smaller than with $\theta_2$. Note that, however, if the angle is higher than $\theta_2$, the laser footprint will spread out further, and the total received energy might reduce instead. Thus, the beam divergence angle $\theta$ has a significant impact on the total received energy over the receiver's detection area (and hence the rate).
 
\subsection{Time Horizon and Uncertainty}
We consider a horizon of $T$ time slots between two acquisitions. Let $\mathcal{T} \overset{\Delta}{=} \{1, 2, ..., T\}$, and we use $d_t$ $(t \in \mathcal{T})$ to represent the distance between the center of the detector disk and the origin in time slot $t$. We assume that there is no target deviation immediately after acquisition. Over time, deviation $d_t$ $(t \ge 1)$ may occur, resulting in a change in the detector's center. We denote the detector disk area in time slot $t$ as $D_t$. Note that even if $d_t$ is known, the position of the detector remains unknown. However, the function $I(x,y)$ is centrosymmetric. Thus, without loss of generality, we assume that the deviations only occur along the $x$-axis. Thus, $D_t$ can be expressed as:
\begin{align}
    D_t: \left\{\left(x, y\right)| \left(x-d_t\right)^2+y^2 \le r^2\right\}. \label{constraint2}
\end{align}

We represent the deviations in time slots as $\boldsymbol{d} \overset{\Delta}{=} \{d_1, d_2, ..., d_T\}$ $(d_t \ge 0, \forall t \in \mathcal{T})$. Robust optimization uses an uncertainty set to account for uncertain parameters or conditions. A common choice is the so-called ``box" or ``hyper-rectangle" uncertainty set, where each uncertain parameter is allowed to vary within a fixed range or interval. However, box uncertainty is found to be too pessimistic \cite{gorissen2015practical}. A better option is to use a ``budget" constraint that limits the total uncertainty or deviation allowed in the system. This is often motivated by that all uncertain parameters reaching their worst-case bounds simultaneously is extremely unlikely to occur in practice.

We use an uncertainty set $\mathcal{U}$ defined as follows, though we remark that our algorithmic approach is not limited to this particular definition. 
\begin{numcases}{\mathcal{U}:}
    &$ d_1  \le d_{\text{gap}} ,$ \label{constraint4}\\
    &$ \left| d_{t+1} - d_{t} \right| \le d_{\text{gap}}, \forall t \in \mathcal{T} \setminus \left\{T\right\},$ \label{constraint5}\\
    &$ \sum_{t \in \mathcal{T}} d_t \le d_{\text{total}},$ \label{constraint6}
\end{numcases}
\begin{itemize}
    \item Constraint \eqref{constraint4} state that the deviation can be at most $d_{\text{gap}}$ in the first time slot.
    \item Constraint \eqref{constraint5} imposes that the difference between the deviations in two consecutive time slots is at most $d_{\text{gap}}$. This is motivated by that deviation does not change too rapidly over neighboring time slots. In addition, this constraint also sets the worst-case deviation of any individual time slot.
    \item Constraint \eqref{constraint6} is the so-called budget constraint stating the total deviation over all time slots is at most $d_{\text{total}}$ (with $d_{\text{total}} < T \cdot d_{\text{gap}}$).
\end{itemize}

\subsection{Problem Formulation}
  In time slot $t$, the total received energy by the detector can be calculated by
\begin{align}\label{initial_integral}
    \iint_{D_t} I\left(L, \theta, x, y\right)dx dy.
\end{align}
Given transmitter power $P$, transmission optical efficiency $\tau_{opt}$, the achievable data rate of the receiver is computed by
\begin{align}
    R_t = \frac{P\tau_{opt}}{E_p N_b}\iint_{D_t} I\left(L, \theta, x, y\right)dx dy,
\end{align}
where $E_p = \frac{hc}{\lambda}$ is the photon energy and $N_b$ is the receiver sensitivity in photons/bit \cite{majumdar2005free}. 

We would like to optimize the divergence angle to maximize the sum rate $R_{\text{sum}} \overset{\Delta}{=} \sum_{t\in \mathcal{T}} R_t$ that can be guaranteed between two acquisitions. The robust optimization problem can be formulated as
\begin{align}\label{formulation}
    R^* = \max_{\theta}\ \min_{\boldsymbol{d} \in \mathcal{U}}\ & R_{\text{sum}}.
\end{align}

An intuition for looking for a robust solution is to identify a single ``worse-case" scenario of the uncertainty set, and then design the system with respect to that scenario. However, this intuition fails for two reasons. First, even if there is such a scenario (i.e., a vector $\boldsymbol{d}$), identifying it may be very difficult. Second, there may not even exists a single scenario that is the ``worst case". For instance, one sub-range of $\theta$ may have worst-case scenario $\boldsymbol{d}$, while another sub-range may have worst-case scenario $\boldsymbol{d}'$. Thus, it is not viable to approach the problem with a ``worst-case" mind. Additionally, the problem involves integrals, which makes it intractable to model with explicit mathematical equations. Therefore, it is not feasible to apply the reformulation technique \cite{averbakh2008explicit} of robust optimization.

\section{Problem Solving}

\subsection{Overview}

We propose an iterative solution procedure. The approach begins by considering a finite subset of scenarios of the uncertainty set. Next, the corresponding robust optimization problem for this subset is solved. If the solution obtained is robustly feasible, then we have found the optimal solution. The term robustly feasible means that the performance of this solution, even though obtained for the subset, can in fact be guaranteed for the entire uncertainty set. If not, we need to identify a scenario from the uncertainty set that is not robustly feasible. Thus, we look for the scenario in the uncertainty set that results in lowest performance of the current solution. Once this scenario is identified, it is added to the subset, and the expanded optimization problem is solved. This process continues until a robustly feasible solution is found or until a prescribed convergence tolerance is reached.

Specifically, the proposed approach comprises two modules. That are alternately tackled. The first module involves solving a decision maker’s problem (DMP) as follows, which is similar to the original problem formulation but with a finite subset of the uncertainty set. 
\begin{align}\label{Main_problem}
    \text{[DMP] }\max_{\theta}\ \min_{\boldsymbol{d}\in \mathcal{U}'}\ & R_{\text{sum}}(\theta)  
\end{align}
where $\mathcal{U}' \subset \mathcal{U}$ is finite.

Once we obtain the optimal solution $\theta^*_{\text{DMP}}$ of the DMP \eqref{Main_problem} and its objective function value $R_{\text{sum}}(\theta^*_{\text{DMP}})$, we move on to finding the most unfavorable deviation scenario for $\theta^*_{\text{DMP}}$. This task is called the adversarial problem (AP). The AP is formulated as:
\begin{align}\label{Secondary_problem}
    \text{[AP] }\min_{\boldsymbol{d}\in \mathcal{U}}\ & R_{\text{sum}}(\theta^*_{\text{DMP}}, \boldsymbol{d})   \\
    \text{s.t. } & \eqref{constraint4}\text{-}\eqref{constraint6}. \notag
\end{align}
In the AP, we find $\boldsymbol{d}^*_{\text{AP}}$ in $\mathcal{U}$ that is the worst-case deviation sequence, i.e., leading to lowest sum rate, for angle $\theta^*_{\text{DMP}}$. If $R_{\text{sum}}(\theta^*_{\text{DMP}}, \boldsymbol{d}^*_{\text{AP}}) < R_{\text{sum}}(\theta^*_{\text{DMP}})$, angle $\theta^*_{\text{DMP}}$ is not robustly feasible. Thus, we add $\boldsymbol{d}^*_{\text{AP}}$ as a new scenario to $\mathcal{U}'$, and repeat the process. 


\subsection{AP as a Resource Constrained Shortest Path Problem}
We would like to highlight that finding the solution $\boldsymbol{d}^*_{\text{AP}}$ to AP is not an easy task, even if $\theta^*_{\text{DMP}}$ is given. The sum rate expression uses integrals, making the problem computationally intractable. To overcome this challenge, we propose the use of discretization and then reduce the problem to a resource-constrained shortest path problem (RCSPP).

We first discretize the continued variables $d_t$'s ($t\in\mathcal{T}$) of AP by step size $\Delta$, i.e.,
\begin{align}\label{discretize}
    d_t \in \left\{n \cdot \Delta| n \in \mathbb{N} \right\}, \forall t \in \mathcal{T},
\end{align}
where $\mathbb{N}$ contains non-negative integers.
This leads to a discrete approximate AP (AAP). The AAP has the same constraints \eqref{constraint4}-\eqref{constraint6} but its optimization variables follow \eqref{discretize}. We will show that AAP maps to finding the shortest path with resource constraint in a directed graph. The graph is illustrated in Fig. \ref{fig:shortest-path}, and the construction is detailed below.

\input{fig-shortest-path}

\subsubsection{Structure Overview}

The graph consists of $T$ sections indexed by time slot $t$ $(t \in \mathcal{T})$ except the source node $\Phi$ and the sink node $\Omega$. Node $\Phi$ represents the state immediately before time slot one. For $\lambda$-nodes in time slot $T$, they all have an arc to the sink node $\Omega$. Each $\lambda$-node represents a deviation state, where the superscript indicates the associated time slot and the subscript represents the deviation. For instance, node $\lambda^{t}_{i}$ represents that the deviation $d_{t} = i \cdot \Delta$ in time slot $t$. In time slot $t$, there are $1 + t \times \lfloor \frac{d_{\text{gap}}}{\Delta} \rfloor$ $\lambda$-nodes. Arc $(\Phi \rightarrow \lambda ^{1}_{i})$ denotes the deviation changes from $0$ to $i \cdot \Delta$ in time slot one. Similarly, Arc $(\lambda ^{t}_{i} \rightarrow \lambda ^{t+1}_{\hat{i}})$ denotes that the deviation changes from $i \cdot \Delta$ to $\hat{i} \cdot \Delta$ of time slots $t$ and $t+1$. Because the deviation of two consecutive time slots is constrained by \eqref{constraint4}, any node $\lambda ^{t}_{i}$ $(1\le t \le T-1)$ or source node $\Phi$ has at most $1+2 \times \lfloor \frac{d_{\text{gap}}}{\Delta} \rfloor$ arcs to some $\lambda$-nodes in time slot $t+1$ (with $t = 0$ for node $\Phi$ as a special case). Let $\hat{i}$ represent a subscript of these in time slot $t+1$,then $\hat{i}$ is constrained by
\begin{align}\label{gap constraint}
    \hat{i} \in \left[i-\lfloor \frac{d_{\text{gap}}}{\Delta} \rfloor, i+\lfloor \frac{d_{\text{gap}}}{\Delta} \rfloor\right] \cap \mathbb{N}.
\end{align} 

\subsubsection{Arcs Attributes}

In RCSPP, an arc has two attributes, i.e., weight and resource consumption. For arc $(\lambda ^{t}_{i} \rightarrow \lambda ^{t+1}_{\hat{i}})$, the weight is calculated by
\begin{align}\label{wight}
    \frac{P\tau_{opt} }{E_p N_b}\iint_{D_{\hat{i}}} I\left(L, \theta^*_{\text{DMP}}, x, y\right)dx dy,
\end{align}
where
\begin{align}
    D_{\hat{i}}: \left\{\left(x,y\right)|(x-\hat{i} \cdot \Delta)^2+y^2 \le r^2\right\}.
\end{align}
In addition, the resource consumption of the arc is set to be $\hat{i}$. For arc $(\lambda ^{T}_{i} \rightarrow \Omega)$, the weight and the resource consumption are both zero.

\subsubsection{Resource Constraint}

In RCSPP, selecting an arc implies to consume the associated amount of resource, and we need to find the shortest path within a given resource limit. Considering constraint \eqref{constraint6}, the resource limit in the corresponding RCSPP is $\lfloor \frac{d_{\text{total}}}{\Delta} \rfloor$.

\begin{theorem}
For any given $\theta^*_{\text{DMP}}$, the AAP \eqref{Secondary_problem} can be solved in pseudo-polynomial time as an RCSPP. 
\end{theorem}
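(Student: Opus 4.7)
The plan is to show the claim in two steps: first, establish a weight- and resource-preserving bijection between feasible solutions of the AAP and $\Phi$-to-$\Omega$ paths in the constructed graph that respect the resource limit $\lfloor d_{\text{total}}/\Delta \rfloor$; second, invoke the classical pseudo-polynomial dynamic program for RCSPP on the resulting graph.

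For the bijection, I would argue that any feasible AAP solution $\boldsymbol{d} = (d_1,\ldots,d_T)$ satisfying \eqref{discretize} can be written uniquely as $d_t = i_t \cdot \Delta$ with $i_t \in \mathbb{N}$, and hence identifies the path $\Phi \to \lambda^{1}_{i_1} \to \lambda^{2}_{i_2} \to \cdots \to \lambda^{T}_{i_T} \to \Omega$. I would then verify that this path exists in the graph precisely when $\boldsymbol{d}$ is feasible: the arc $\Phi \to \lambda^{1}_{i_1}$ is present iff $i_1 \le \lfloor d_{\text{gap}}/\Delta \rfloor$, which encodes \eqref{constraint4}; the arcs $\lambda^{t}_{i_t}\to \lambda^{t+1}_{i_{t+1}}$ are present iff \eqref{gap constraint} holds, which is exactly the discretized form of \eqref{constraint5}; and the total resource consumption $\sum_{t=1}^T i_t$ (the last arc into $\Omega$ consumes $0$) satisfies the RCSPP budget $\lfloor d_{\text{total}}/\Delta \rfloor$ iff $\sum_{t=1}^T d_t \le d_{\text{total}}$, which is \eqref{constraint6}. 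Conversely, every feasible $\Phi$-to-$\Omega$ path decodes back to a feasible discretized $\boldsymbol{d}$ by reading off subscripts, so the correspondence is indeed a bijection on feasible solutions.

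Next, I would match objectives. Because the weight of every arc entering a $\lambda$-node in time slot $t$ is defined via \eqref{wight} using that node's own subscript, the weight equals $R_t$ evaluated at $d_t = i_t\cdot\Delta$; the terminal arc into $\Omega$ has zero weight. Summing the arc weights of the path in the previous paragraph therefore yields exactly $\sum_{t\in \mathcal{T}} R_t = R_{\text{sum}}(\theta^*_{\text{DMP}},\boldsymbol{d})$, the AAP objective. Hence minimizing $R_{\text{sum}}$ over feasible $\boldsymbol{d}$ is equivalent to finding a minimum-weight $\Phi$-to-$\Omega$ path in the graph subject to the resource budget, i.e., an RCSPP instance.

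Finally, the complexity claim follows from a standard labeling/DP algorithm for RCSPP: maintain, for every node $v$ and every integer resource level $q \in \{0,1,\ldots,\lfloor d_{\text{total}}/\Delta\rfloor\}$, the best path weight from $\Phi$ reaching $v$ with total resource $q$, updated by arc relaxation in topological order (the graph is acyclic by construction since arcs only move forward in time). The number of nodes is $O(T^2 \lfloor d_{\text{gap}}/\Delta\rfloor)$, each node has out-degree at most $1+2\lfloor d_{\text{gap}}/\Delta\rfloor$, and the resource dimension has $O(\lfloor d_{\text{total}}/\Delta\rfloor)$ states, so the total running time is polynomial in $T$, $\lfloor d_{\text{gap}}/\Delta\rfloor$, and $\lfloor d_{\text{total}}/\Delta\rfloor$, which is pseudo-polynomial in the input. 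The main obstacle I anticipate is not the bijection itself but stating the complexity carefully: one must count nodes per time slot using the bound $1+t\lfloor d_{\text{gap}}/\Delta\rfloor$ and justify why dependence on $d_{\text{total}}/\Delta$ (rather than its bit-length) is acceptable, which is exactly the definition of pseudo-polynomial and matches the well-known complexity of RCSPP on DAGs.
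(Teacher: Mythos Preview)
Your proposal is correct and follows essentially the same two-direction correspondence as the paper's proof: map a feasible discretized $\boldsymbol{d}$ to the path $\Phi \to \lambda^{1}_{i_1}\to\cdots\to\lambda^{T}_{i_T}\to\Omega$ and back, checking that arc existence encodes \eqref{constraint4}--\eqref{constraint5} and the resource budget encodes \eqref{constraint6}, with path weight equal to $R_{\text{sum}}$. The only notable difference is that the paper dispatches the complexity claim in one line by citing that RCSPP is weakly NP-hard, whereas you spell out the layered DP over (node, resource level) states; your version is more self-contained, but both arrive at the same conclusion.
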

\begin{proof}
Suppose the optimal solution $\boldsymbol{d}^*_{\text{AP}} = \left\{d_1^*, d_2^* , ..., d_T^*\right\} = \left\{n_1^* \cdot \Delta, n_2^* \cdot \Delta, ..., n_T^*\cdot \Delta\right\}$ is given. The corresponding path is $ \Phi \rightarrow  \lambda^1_{n_1^*} \rightarrow \cdots \rightarrow \lambda^T_{n_T^*} \rightarrow \Omega$. According to our setting, the resource consumption of the path must be less than the resource limit. In addition, the total arc weight of the derived path equals the sum rate for $\boldsymbol{d}^*_{\text{AP}}$ because we set weights by \eqref{wight} that represents the rate of a time slot for a given deviation.

Conversely, consider the shortest path complying with the resource limit. For time slot $t$, if node $\lambda^t_{i}$ is contained in the path, we set $d_t^*=i\cdot\Delta$. With this variable value setting, the objective function has the same value as the path length (i.e., total arc weight). According to \eqref{gap constraint}, $d_t^*$ and $d_{t+1}^*$ must meet constraint \eqref{constraint5}. Similarly, $\sum_{t\in \mathcal{T}}d_t^*$ must meet constraint \eqref{constraint6} due to the resource limit $\lfloor \frac{d_{\text{total}}}{\Delta} \rfloor$ in RCSPP. One can observe that the solution constructed in this way satisfies the constraints of the AAP. Since RCSPP is weakly NP-hard \cite{JOKSCH1966191}, the conclusion follows.
\end{proof}

\subsection{DMP as a Series of Linear Programming Problems}

After solving AAP, we add the corresponding unfavorable scenario to $\mathcal{U}'$, and aim to solve the DMP under this new subset. However, solving DMP can be challenging as it again involves integrals. We propose a linear approximation method to effectively tackle the DMP.

To start, we discretize variable $\theta$ into $M$ equally sized intervals. Specifically, for $\theta \in \left[\alpha, \omega\right]$, we discretize it into $M$ intervals. We use $\left[\alpha_m, \omega_m \right]$ to represent the $m$-th interval $(m \in \mathcal{M} \overset{\Delta}{=} \left\{1,2,...,M\right\})$, where
\begin{align}\label{interval}
    \alpha_m &= \alpha + (m-1)\cdot \frac{\omega - \alpha}{M},\\
    \omega_m &= \alpha + m\cdot \frac{\omega - \alpha}{M}.
\end{align}
For the $m$-th interval, we use a linear function denoted by ${R}_{\text{sum}}^{\boldsymbol{d}} (\theta, m)$ to approximate the sum rate $R_{\text{sum}} (\theta)$ as a function of $\theta$ under a given scenario $\boldsymbol{d} \in \mathcal{U}'$. The function ${R}_{\text{sum}}^{\boldsymbol{d}} (\theta, m)$ is expressed by
\begin{align}
    {R}_{\text{sum}}^{\boldsymbol{d}} (\theta, m) = \frac{R_{\text{sum}} (\omega_m) - R_{\text{sum}} (\alpha_m)}{\omega_m - \alpha_m} \theta + R_{\text{sum}} (\alpha_m).
\end{align}
Note that the sum rates $R_{\text{sum}} (\alpha_m)$ and $R_{\text{sum}} (\omega_m)$ are easily calculated as the divergence angle $\theta$ (i.e., $\alpha_m$ and $\omega_m$) and the deviation vector $\boldsymbol{d}$ are all given. Thus, we have the following approximated DMP in the $m$-th interval:
\begin{align}\label{AMP}
    \max_{\theta \in \left[\alpha_m, \omega_m \right]}\ \min_{\boldsymbol{d}\in \mathcal{U}'}\ & {R}^{ \boldsymbol{d}}_{\text{sum}} (\theta, m).
\end{align}
With an auxiliary variable $R_m$, we transform problem \eqref{AMP} into the following equivalent linear programming problem:
\begin{subequations}\label{AMP2}
\begin{align}
    \max_{\theta, R_m}\ & R_m \\
    \text{s.t. } & R_m \le {R}^{\boldsymbol{d}}_{\text{sum}} (\theta, m), \forall \boldsymbol{d}\in \mathcal{U}'.
\end{align}
\end{subequations}
We solve a series of problems \eqref{AMP2} for all intervals. Let $\theta^*_m$ and $R_m^*$ denote the optimal solution and the optimal sum rate, respectively, for the $m$-th interval. The solution to DMP for $\theta \in \left[\alpha, \omega\right]$ is given by
\begin{align}
    \theta^*_{\text{DMP}} = \underset{\theta^*_m}{\mathrm{argmax}}\, \left\{ R_m^*| m \in \mathcal{M} \right\}.
\end{align}

\subsection{Summary and Convergence}

Note that solving DMP yields an upper bound (UB) that monotonically improves over the iterations, while the AAP yields a lower bound (LB). We solve the robust optimization problem once the difference between UB and best LB reaches a specified tolerance $\epsilon$. The complete approach is summarized in Algorithm \ref{al:framwork}.

\begin{algorithm}[tbp]
    \DontPrintSemicolon
    \caption{Robust optimization for divergence angle} \label{al:framwork}
    \KwIn{$L$, $r$, $T$, $d_{\text{gap}}$, $d_{\text{total}}$, $\mathcal{U}$, $\epsilon$} 
    \KwOut{$\theta^*$}
    Initialize $ \mathcal{U}^\prime$, Flag $\leftarrow 0$, LB $\leftarrow 0$\\
    \Repeat
    {
    	\rm Flag $= 1$\
    }
    {
    	Solve DMP to obtain $\theta^*_{\text{DMP}}$ and $R_{\text{sum}}(\theta^*_{\text{DMP}})$\\
            UB $\leftarrow R_{\text{sum}}(\theta^*_{\text{DMP}})$ \\ 
            Solve AAP to obtain $\boldsymbol{d}^*_{\text{AP}}$ and $R_{\text{sum}}(\theta^*_{\text{DMP}}, \boldsymbol{d}^*_{\text{AP}})$\\
            \If{\rm LB $< R_{\text{sum}}(\theta^*_{\text{DMP}}, \boldsymbol{d}^*_{\text{AP}})$}
            {
                LB $\leftarrow R_{\text{sum}}(\theta^*_{\text{DMP}}, \boldsymbol{d}^*_{\text{AP}})$\\ 
            }
            \uIf{\rm UB $-$ LB $\le \epsilon$}
            {
                Flag $\leftarrow 1$\\ 
            }             
            \Else
            {
                Add $\boldsymbol{d}^*_{\text{AP}}$ into $\mathcal{U}'$\\
            }
    }   
    \Return {$\theta^* \leftarrow \theta^*_{\text{DMP}}$}\\
\end{algorithm} 

\begin{theorem}
    DMP and AAP provide UB and LB to the optimum $R^*$ of \eqref{formulation}, respectively, with UB improving monotonically over the iterations. Additionally, Algorithm \ref{al:framwork} is guaranteed to converge.
\end{theorem}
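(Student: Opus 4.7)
The plan is to decompose the statement into four sub-claims---(a) the DMP value upper-bounds $R^*$; (b) the AAP value lower-bounds $R^*$; (c) the UB sequence is monotone; and (d) the algorithm terminates---and to handle the three bound-related parts through a single elementary fact: the value of a max-min is monotone with respect to the feasible region of the inner minimizer.

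For (a), I would observe that the DMP in \eqref{Main_problem} coincides with \eqref{formulation} except that $\mathcal{U}$ is replaced by the subset $\mathcal{U}'\subseteq\mathcal{U}$. For every fixed $\theta$, $\min_{\boldsymbol{d}\in\mathcal{U}'}R_{\text{sum}}(\theta,\boldsymbol{d})\ge\min_{\boldsymbol{d}\in\mathcal{U}}R_{\text{sum}}(\theta,\boldsymbol{d})$, and taking $\max_\theta$ of both sides yields $R_{\text{sum}}(\theta^*_{\text{DMP}})\ge R^*$. The same observation delivers (c) immediately: appending $\boldsymbol{d}^*_{\text{AP}}$ to $\mathcal{U}'$ shrinks the inner minimum pointwise and therefore the outer maximum, so the DMP values across iterations form a non-increasing sequence.

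For (b), I would fix the angle $\theta^*_{\text{DMP}}$ returned by the DMP and use the max-min definition of $R^*$ to conclude $\min_{\boldsymbol{d}\in\mathcal{U}}R_{\text{sum}}(\theta^*_{\text{DMP}},\boldsymbol{d})\le R^*$. By Theorem~1 the AAP computes this worst case exactly on the discretization $\mathcal{U}_{\text{disc}}$ of $\mathcal{U}$ induced by \eqref{discretize} and \eqref{constraint4}--\eqref{constraint6}, so the returned $R_{\text{sum}}(\theta^*_{\text{DMP}},\boldsymbol{d}^*_{\text{AP}})$ is the worst-case sum rate actually delivered by $\theta^*_{\text{DMP}}$ over $\mathcal{U}_{\text{disc}}$, and hence a lower bound. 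Reconciling the discretized minimization with the continuous $\mathcal{U}$ is the main technical nuisance; I would handle it either by treating $\mathcal{U}_{\text{disc}}$ as the operative uncertainty set (implicit once \eqref{discretize} is adopted) or by a grid-refinement argument that uses continuity of $R_{\text{sum}}$ in $\boldsymbol{d}$ to bound the gap by a function of $\Delta$.

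For (d), the key observation is that $\mathcal{U}_{\text{disc}}$ is finite, since each $d_t$ lies on a bounded grid of step $\Delta$. At every iteration either the stopping test $\text{UB}-\text{LB}\le\epsilon$ fires and the algorithm halts, or a new scenario $\boldsymbol{d}^*_{\text{AP}}$ is appended to $\mathcal{U}'$. A short argument shows this scenario must be genuinely new: if $\boldsymbol{d}^*_{\text{AP}}\in\mathcal{U}'$ already, then $R_{\text{sum}}(\theta^*_{\text{DMP}},\boldsymbol{d}^*_{\text{AP}})\ge R_{\text{sum}}(\theta^*_{\text{DMP}})$ by feasibility of $\theta^*_{\text{DMP}}$ for every scenario in $\mathcal{U}'$, forcing $\text{LB}\ge\text{UB}$ and termination at the previous step. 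Hence $|\mathcal{U}'|$ strictly increases each iteration and is capped by $|\mathcal{U}_{\text{disc}}|$, yielding termination in finitely many iterations.
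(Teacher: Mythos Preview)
Your proposal is correct and follows essentially the same approach as the paper: the subset monotonicity of the inner minimum for the UB and its monotone improvement, the max--min definition for the LB, and finiteness of the discretized scenario set for convergence. You are in fact more careful than the paper in two places---flagging the discretization gap between AAP and the continuous AP, and explicitly arguing that each added scenario must be new (else $\text{LB}\ge\text{UB}$ forces termination)---both of which the paper's proof leaves implicit.
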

\begin{proof}
    For any $\theta >0$, we have $\min_{\boldsymbol{d}\in \mathcal{U}'} R_{\text{sum}}(\theta) \ge \min_{\boldsymbol{d}\in \mathcal{U}}R_{\text{sum}}(\theta)$ since $\mathcal{U}'$ is the subset of $\mathcal{U}$. Therefore, for DMP, $R_{\text{sum}}(\theta^*_{\text{DMP}}) \ge R^*$. Next, for any $\theta^*$ obtained from DMP, we have $\min_{\boldsymbol{d}\in \mathcal{U}}\ R_{\text{sum}}(\theta^*_{\text{DMP}}, \boldsymbol{d}) \le R^*$ by the definition of $R^*$. Thus, $R_{\text{sum}}(\theta^*_{\text{DMP}},\boldsymbol{d}^*_{\text{AP}}) \le R^*$ and the AAP provides LB to $R^*$. 

    Because $\mathcal{U}'$ grows over iterations, the UB obtained by DMP improves monotonically. Furthermore, by the construction of AAP, only a finite number of scenarios will be added to $\mathcal{U}'$ as there is a finite number of paths in the graph. Thus, Algorithm \ref{al:framwork} will converge. Hence the conclusion.
\end{proof}

\section{Performance Evaluation}


Table \ref{tab:parameters} provides details of the parameters used in the simulation. Note that deviation in unit of centimeters meters can be calculated by multiplying the angular deviations with the distance $L$. We set the maximum total deviation change $d_{\text{total}}$ to be proportional to the number of time slots $T$. We use the no-deviation scenario (i.e., $\boldsymbol{d} = \boldsymbol{0}$) to initialize the subset $\mathcal{U}'$.

\begin{table}[h]
    \caption{\label{tab:parameters}Simulation Parameters.}
    \footnotesize
    \begin{center}
        \begin{threeparttable}[b]
            \begin{tabular}{*{2}{lr}}
                \toprule
                \midrule
                {\bf Parameter} & {\bf Value}\\
                \midrule
                Divergence angle ($\theta$) & $[0.01, 1000]$ $\mu$rad\\
                Distance ($L$) & $40$ km\\
                Radius of the detector ($r$) & $15$ cm\\
                Transmit power ($P$) & $70$ mW\\   
                Transmission optical efficiency ($\tau_{opt}$) & $0.01$\\
                Laser wavelength ($\lambda$) & $850$ nm\\
                Receiver sensitivity ($N_b$) & $100$ photons/bit\\
                Number of time slots ($T$) & $\{6, 8, 10, 12, 14\}$ \\
                Maximum deviation change in a time slot ($d_{\text{gap}}$) & 1$\mu$rad \\
                Maximum total deviation ($d_{\text{total}}$) & $40\% \times T\times d_{\text{gap}}$ $\mu$rad \\
                Convergence tolerance ($\epsilon$) & $10^{-4}$\\
                \bottomrule
            \end{tabular}
        \end{threeparttable}
    \end{center}
\end{table}

In our simulation, the robust angle (RA) obtained by our approach is compared to the following two reference schemes: 
\begin{itemize}
    \item Small angle (SA): This scheme transmits the laser using the smallest possible angle; SA is optimal for the no-deviation scenario.
    \item Average-deviation Angle (AA): This scheme splits the maximum possible total deviation evenly over $T$ time slots (i.e., $\frac{d_{\text{total}}}{T}$) and then finds the optimal divergence angle for this averaged deviation by the method in \cite{song2016performance}.
\end{itemize}

For each scheme, we compare its worst-case performance, i.e., the scenario in $\mathcal{U}$ leading the lowest sum rate. Note that for RA, the worst-case performance is at most $\epsilon$ lower by algorithm construction. For SA and AA, the worst-case performance is derived by solving the corresponding instances of AAP. 

We also examine the performance in an average sense. To this end, we generate randomly scenarios from set $\mathcal{U}$. For $t \in \mathcal{T}$, $d_t$ is randomly and uniformly generated in $\left[0, d_{\text{total}}\right]$. The scenario is accepted if it satisfies the constraints of the uncertainty set $\mathcal{U}$ (i.e., it is indeed in uncertainty set $\mathcal{U}$). A total of 1000 such random scenarios in $\mathcal{U}$ are used for simulation.

\begin{figure}[t]
        \begin{center}
        \scriptsize
    		\begin{tikzpicture}
        		\begin{axis}[
        			scaled y ticks=base 10:-3,
        		    xlabel={The number of time slots $T$},
        		    ylabel={Rate per time slot (Gbit/s)},
        		    xmin=6, xmax=14,
        		   xtick={6, 8, 10, 12, 14 },
                        legend style={at={(0.75, 0.5)},anchor=west},
        		    grid style=densely dashed,
        		    tick label style={font=\scriptsize},
        		    label style={font=\small},
        		    legend style={font=\scriptsize},
        		]
        		
            		\addplot[ color=red, mark=square, line width=0.8pt]     
            		coordinates { 
                    ( 6 , 3878)
                    ( 8 , 3678 )
                    ( 10 , 3478 )
                    ( 12 , 3275 )
                    ( 14 , 3120 )
            		};

            		\addplot[ color=black, mark=triangle, densely dashed, mark options={solid}, line width=0.8pt]
            		coordinates {
                    ( 6 , 0 )
                    ( 8 , 0 )
                    ( 10 , 0 )
                    ( 12 , 0 )
                    ( 14 , 0 )
            		};
            		
            		\addplot[ color=blue, mark=o, dotted, mark options={solid}, line width=0.8pt]
            		coordinates {
                    ( 6 , 3528 )
                    ( 8 , 3228 )
                    ( 10 , 3000 )
                    ( 12 , 2808 )
                    ( 14 , 2628 )
            		};

            		\legend{RA, SA, AA}
        		
        		\end{axis}
    		\end{tikzpicture}
    		
        \end{center}
    \caption{Rate per time slot under worst-case performance with respect to the number of time slots $T$.}\label{fig:result1}
\end{figure}
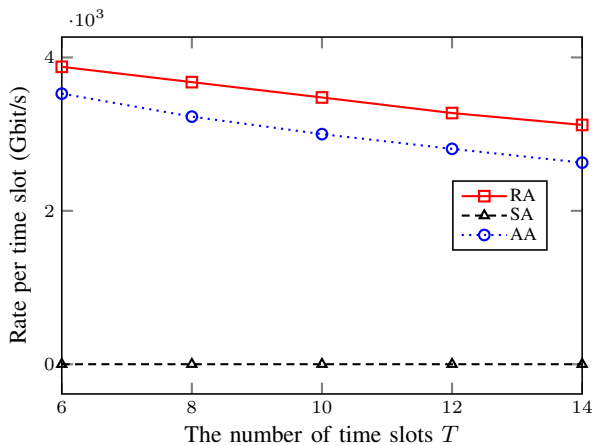

Fig. \ref{fig:result1} shows the rate per time slot of the three schemes under their respective worst-case scenarios. In other words, the results represent the performance that can be guaranteed by the schemes no matter which scenario in the uncertainty set materializes. The SA scheme always has a rate being close to zero, as it has no safeguard against any deviation. The AA scheme is significantly better in robustness, whereas we obtain about 10\% higher performance guarantee via RA. The rate of both RA and AA decreases gradually in the number of time slots as the (total) deviation accumulates.

\begin{figure}
\centering
\scriptsize
\begin{tikzpicture}
  \begin{axis}
    [
    scaled y ticks=base 10:-3,
    clip=false,
    boxplot/draw direction=y,
    boxplot/variable width,
    boxplot/every median/.style={black,very thick,solid},
    ylabel style={align=center}, 
    ylabel={\small Sum rate (Gbit/s)},
    ytick={0,3,6,9,12,15},
    y tick label style={align=right},
    yticklabels={0,3, 6, 9, 12, 15},
    xtick={0,1,2,3,4},
    x tick label style={align=center},
    xticklabels={,RA,SA,AA}
    ]



    \addplot[fill = red, draw = red, fill opacity=0.1,
    mark=*,
    boxplot,
    mark options={fill=white}, 
    boxplot prepared={
      average=8.01,
      lower whisker=6.12,
      lower quartile=6.57,
      median=7.23,
      upper quartile=8.4,
      upper whisker=9.33,
      sample size=3
    },
    ] 
    coordinates {
    }
    node[right,font=\scriptsize, fill opacity=1] at (boxplot box cs: \boxplotvalue{average}, 0.95)
    {\boxplotvalue{average}};    
    ;

    \addplot[fill = black, draw = black, fill opacity=0.1,
    mark=*, 
    boxplot,
    mark options={draw = gray, fill = white}, 
    boxplot prepared={
      average=7.7,
      lower whisker=10.4,
      lower quartile=9.0,
      median=7.2,
      upper quartile=5.8,
      upper whisker=4.45,
      sample size=3 
    },
    ] 
    coordinates {
    (2,0) 
    (2,0.1) 
    (2,0.3) 
    (2,0.9) 
    (2,0.8) 
    (2,3.3) 
    (2,3.0) 
    (2,2.0) 
    (2,15) 
    (2,14.6) 
    (2,13.5) 
    (2,13.2) 
    (2,12.9) 
    (2,10.9) 
    }
    node[right,font=\scriptsize, fill opacity=1] at (boxplot box cs: \boxplotvalue{average}, 0.95)
    {\boxplotvalue{average}};    
    ;

    \addplot[fill = blue, draw = blue, fill opacity=0.1,
    mark=*, 
    boxplot,
    mark options={draw = white!50!blue, fill=white}, 
    boxplot prepared={
      average=8.34,
      lower whisker=6.3,
      lower quartile=6.84,
      median=7.2,
      upper quartile=8.7,
      upper whisker=9.45,
      sample size=3 
    },
    ] 
    coordinates {
    (3, 9.96)
    (3, 10.2)
    (3, 5.1)
    (3, 5.3)
    (3, 5.6)
    (3, 4.8)
    }
    node[right,font=\scriptsize, fill opacity=1] at (boxplot box cs: \boxplotvalue{average}, 0.95)
    {\boxplotvalue{average}};    
    ;

  \end{axis}
\end{tikzpicture}
\caption{Box plot showing the sum rate results obtained from 1000 randomly generated cases with $T = 10$. The numbers on the right of the boxes indicate the average sum rate for each individual scheme.}\label{fig:result2}    
\end{figure}
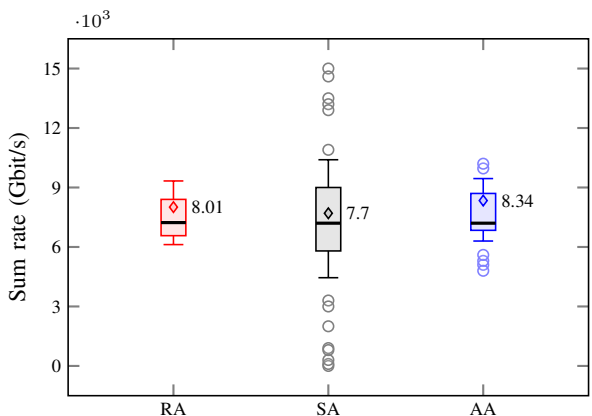

In Fig. \ref{fig:result2}, we present box plots of the simulation results of 1000 random scenarios in $\mathcal{U}$, where outliers are classified using the 1.5 interquartile range rule. SA remains inferior, both in terms of average and the number of outliers of low rate. Note that for some ``lucky” scenarios (i.e., if the deviation happens to be very small), SA does provide very good rate. For RA and AA, one can observe that the latter has a slightly better average rate, with the price of being less overall robustness (as shown in the previous figure), and outliers associated with the unfavorable scenarios.

\begin{figure}[t]
        \begin{center}
        \scriptsize
    		\begin{tikzpicture}
        		\begin{axis}[
        			scaled y ticks=base 10:-3,
        		    xlabel={Iteration number},
        		    ylabel={Sum rate (Gbit/s)},
        		    xmin=1, xmax=7,
        		   xtick={1, 2, 3, 4, 5, 6, 7},
        		    legend style={at={(0.43, 0.75)},anchor=west},
        		    grid style=densely dashed,
        		    tick label style={font=\scriptsize},
        		    label style={font=\small},
        		    legend style={font=\scriptsize},
        		]
            		\addplot[ color={rgb:red,255;green,20;blue,147}, mark = square, line width=0.8pt]     
            		coordinates { 
                    ( 1 , 10260 )
                    ( 2 , 5060 )
                    ( 3 , 4560 )
                    ( 4 , 4260 )
                    ( 5 , 4260 )
                    ( 6 , 4260 )
                    ( 7 , 4260 )
            		};
            		\addplot[ color={rgb:red,0;green,178;blue,238}, mark = o, mark options={solid}, densely dashed, line width=0.8pt]
            		coordinates {
                    ( 1 , 1260 )
                    ( 2 , 3860 )
                    ( 3 , 4060 )
                    ( 4 , 4260 )
                    ( 5 , 4260 )
                    ( 6 , 4260 )
                    ( 7 , 4260 )
            		};

            		\legend{Upper bound, Lower bound}
        		
        		\end{axis}
    		\end{tikzpicture}
    		
        \end{center}
    \caption{Convergence behavior of the proposed approach with upper and lower bounds as functions of the number of iterations when $T = 8$.}\label{fig:result3}
\end{figure}
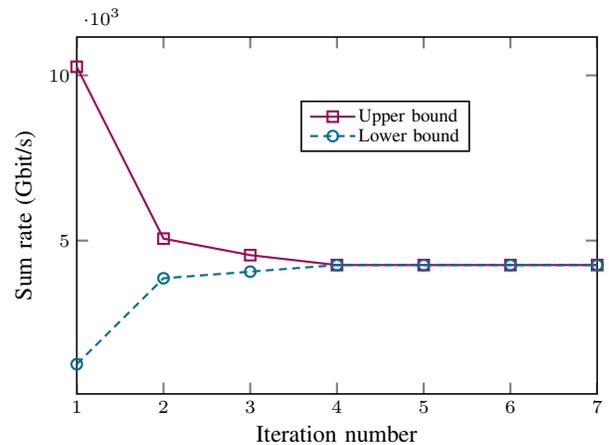

In Fig. \ref{fig:result3}, we illustrate algorithm convergence. The plot demonstrates that the proposed method has a satisfactory convergence speed; the UBs and LBs meet after a small number of iterations.

\section{Conclusion}

We proposed a robust optimization approach for maximizing the sum rate for in inter-satellite laser communication under uncertainty of the target deviation. The approach enables a solution that provides the best possible performance guarantee for all scenarios in the uncertainty set. From the simulation results, the average performance of the robust solution is very good (and stable). Thus robust optimization does open new perspectives for satellite laser communications.

\bibliographystyle{IEEEtran}
\bibliography{mybibtex}

\end{document}